\documentclass[11pt]{article}

\usepackage[margin=1in]{geometry}
\usepackage{moi}

\newcommand{\Ham}{\mathrm{Ham}}
\newcommand{\dist}{\mathrm{dist}}
\newcommand{\bits}{\{0,1\}}

\newcommand{\cX}{\mathcal{X}}
\newcommand{\cF}{\mathcal{F}}
\newcommand{\cP}{\mathcal{P}}
\newcommand{\cQ}{\mathcal{Q}}

\title{Quantum Sketches, Hashing, and Approximate Nearest Neighbors}
\author{Sajjad Hashemian}
\date{\email{sajjadhasehmian@ut.ac.ir}}

\begin{document}
\maketitle

\begin{abstract}
Motivated by Johnson--Lindenstrauss dimension reduction, amplitude encoding, and the view of measurements as hash-like primitives, one might hope to compress an $n$-point approximate nearest neighbor (ANN) data structure into $O(\log n)$ qubits.
We rule out this possibility in a broad quantum sketch model, the dataset $P$ is encoded as an $m$-qubit state $\rho_P$, and each query is answered by an arbitrary query-dependent measurement on a fresh copy of $\rho_P$.
For every approximation factor $c\ge 1$ and constant success probability $p>1/2$, we exhibit $n$-point instances in Hamming space $\{0,1\}^d$ with $d=\Theta(\log n)$ for which any such sketch requires $m=\Omega(n)$ qubits, via a reduction to quantum random access codes and Nayak's lower bound.
These memory lower bounds coexist with potential quantum query-time gains and in candidate-scanning abstractions of hashing-based ANN, amplitude amplification yields a quadratic reduction in candidate checks, which is essentially optimal by Grover/BBBV-type bounds.
\end{abstract}

\section{Introduction}

Nearest neighbor search is a simple question with a large practical footprint.
One is given a dataset $P$ of points, later receives a query point $q$, and wants to return a nearby dataset point.
In high dimensions, exact nearest neighbor search is notoriously expensive.
Approximate nearest neighbor (ANN) relaxes correctness, for an approximation factor $c\ge 1$, it suffices to return a point within factor $c$ of the optimum distance.
This relaxation is strong enough to support many algorithmic ideas and weak enough to remain useful in applications.
A foundational paradigm is locality-sensitive hashing (LSH), introduced for ANN by Indyk and Motwani \cite{IM98} and developed extensively thereafter.
Random-hyperplane hashing for cosine similarity is a canonical example analyzed by Charikar \cite{Cha02}.

Quantum information suggests a different style of summary.
A state on $m$ qubits lives in a complex vector space of dimension $2^m$, which makes it natural to ask whether an entire dataset can be stored as one short quantum state.
The Johnson--Lindenstrauss lemma \cite{JL84} reinforces this intuition: after a JL reduction to $\Theta(\log n)$ dimensions, amplitude encoding can represent a single point using only $O(\log\log n)$ qubits, and one might imagine that query-dependent measurements act like hash functions.
The strongest form of this hope is that a worst-case ANN data structure for $n$ points can be compressed into $O(\log n)$ qubits.

This paper shows that the above compression dream fails in a very broad model.
The obstruction is not geometric.
It is informational.
We build a family of datasets indexed by bit strings $x\in\bits^n$ and a corresponding set of queries so that the correct approximate nearest neighbor answer to query $i$ reveals bit $x_i$.
Any sketch that answers these queries therefore gives a quantum random access code (QRAC) for $n$ bits, and Nayak's lower bound \cite{Nay99} forces $\Omega(n)$ qubits.

At the same time, our lower bound is not an argument against quantum speedups for similarity search in general.
If the dataset is stored in classical memory (or accessed via coherent oracles) and hashing produces a candidate set of size $M$, then Grover-type search can reduce $M$ candidate checks to $\tilde O(\sqrt{M})$ checks under coherent access assumptions.
Moreover, BBBV-style oracle lower bounds \cite{BBBV97} show that this quadratic improvement is essentially optimal for unstructured candidate validation.
This delineates a realistic frontier: quantum computation may plausibly accelerate the \emph{search among candidates}, but it cannot compress arbitrary datasets into logarithmically many qubits while retaining worst-case ANN power.

\section{Preliminaries}

\begin{definition}[$c$-Approximate Nearest Neighbor]
Let $(\cX,\dist)$ be a metric space and let $P\subseteq \cX$ be a finite dataset.
For a query $q\in \cX$, write $\dist(q,P)=\min_{p\in P}\dist(q,p)$.
Fix $c\ge 1$.
A (possibly randomized) data structure solves $c$-ANN if for every dataset $P$ and query $q$ it outputs a point $p\in P$ satisfying
\[
\dist(q,p)\le c\cdot \dist(q,P).
\]
\end{definition}

Our lower bound is stated in Hamming space $\bits^d$ with Hamming distance $\Ham(\cdot,\cdot)$.

\begin{definition}
Fix $n$.
A quantum sketch consists of an encoder and a decoder.
The encoder maps each dataset $P=\{p_1,\dots,p_n\}$ to an $m$-qubit density matrix $\rho_P$.
The decoder takes as input a query $q$, receives a copy of $\rho_P$, performs an arbitrary quantum measurement that may depend on $q$ (and may use internal randomness and ancillas), and outputs an index in $[n]$ interpreted as the returned dataset point.
\end{definition}

\begin{remark}
A measurement generally disturbs a quantum state.
All lower bounds below hold in the stronger model where the decoder is given an independent fresh copy of $\rho_P$ for each query.
Therefore the bounds apply a fortiori to any single-copy reusable scheme.
\end{remark}

This model is intentionally permissive.
Any scheme based on amplitude encoding, query-dependent hashing measurements, or other structured designs is a special case.

\begin{definition}[Quantum random access code]
An $(n,m,p)$-QRAC is a map $x\in\bits^n \mapsto \rho_x$ where $\rho_x$ is an $m$-qubit state such that for every index $i\in[n]$ there exists a measurement (depending on $i$) whose outcome equals $x_i$ with probability at least $p$, for every $x$.
\end{definition}

Write $h(p)=-p\log_2 p-(1-p)\log_2(1-p)$ for the binary entropy function.

\begin{theorem}[Lower bound for QRACs \cite{Nay99}]
\label{thm:nayak}
If an $(n,m,p)$-QRAC exists with $p\ge 1/2$, then
\[
m \;\ge\; (1-h(p))\,n.
\]
In particular, for any constant $p>1/2$, one has $m=\Omega(n)$.
\end{theorem}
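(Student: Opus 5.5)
The plan is the standard information-theoretic proof of Nayak's bound: use the Holevo bound together with a chain rule over the $n$ encoded bits. Let $X=(X_1,\dots,X_n)$ be uniformly distributed on $\bits^n$. Form the classical--quantum state
\[
\sigma_{XQ}=2^{-n}\sum_{x}|x\rangle\langle x|\otimes\rho_x ,
\]
where $Q$ is the $m$-qubit register. The goal is to sandwich the mutual information $I(X;Q)$ between two bounds. The upper bound is immediate: $I(X;Q)\le S(Q)\le m$, since the von Neumann entropy of an $m$-qubit state is at most $m$. The real content is the lower bound $I(X;Q)\ge(1-h(p))n$.

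To prove the lower bound, apply the chain rule for quantum mutual information:
\[
I(X;Q)=\sum_{i=1}^n I(X_i;Q\mid X_1\cdots X_{i-1}).
\]
Fix $i$ and a prefix $x_{<i}$. Conditioned on this prefix, $X_i$ is still a uniform bit, because the coordinates of $X$ are independent. The register $Q$ is then in the state $\rho^{(b)}_{x_{<i}}$, obtained by averaging $\rho_x$ over all $x$ with this prefix and $x_i=b$. The QRAC measurement for index $i$ recovers $x_i$ with probability at least $p$ for every $x$, so by averaging it also distinguishes $\rho^{(0)}_{x_{<i}}$ from $\rho^{(1)}_{x_{<i}}$ with success probability at least $p$.

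Let $Y$ be the measurement outcome. By data processing, measuring $Q$ can only lose information, so
\[
I(X_i;Q\mid x_{<i})\ge I(X_i;Y)=1-H(X_i\mid Y).
\]
Fano's inequality for a binary variable gives $H(X_i\mid Y)\le h(\Pr[Y\neq X_i])$. We have $\Pr[Y\neq X_i]\le 1-p\le 1/2$, and $h$ is increasing on $[0,1/2]$, so $h(\Pr[Y\neq X_i])\le h(1-p)=h(p)$. This is where the hypothesis $p\ge 1/2$ is used. Each term is therefore at least $1-h(p)$. Averaging over prefixes and summing over $i$ gives
\[
m\ge I(X;Q)\ge(1-h(p))n .
\]
For a constant $p>1/2$, the quantity $1-h(p)$ is a positive constant, so $m=\Omega(n)$.

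The main obstacle is the conditioning step. The QRAC guarantee is stated for individual strings $x$, while the chain rule needs a guarantee for states conditioned on a prefix and averaged over the remaining bits. The key observation is that the success guarantee holds pointwise for every $x$, so it survives any averaging. A second point to check is the form of the chain rule: it should be applied to classical conditioning on $X_{<i}$, where it is just an average of mutual informations over prefixes, so no strong subadditivity subtleties arise beyond the standard chain rule.
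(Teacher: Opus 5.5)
Your proposal is correct and follows essentially the same route as the paper's appendix proof: the Holevo-type bound $I(X:B)\le S(\rho)\le m$, the chain rule with classical conditioning on prefixes, and data processing through the index-$i$ decoding measurement. Your use of Fano's inequality together with the monotonicity of $h$ on $[0,1/2]$ is a slightly cleaner justification of $H(X_i\mid G_i, X_{<i}=a)\le h(p)$ than the paper's appeal to a binary symmetric channel, since the two conditional error probabilities need not be equal.
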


For completeness, we include a short proof in \cref{sec:proof-nayak}.
The proof rests on Holevo-type information bounds; see also \cite{NC00}.

\section{Lower bound for worst-case sketches}

We now build an explicit family of Hamming instances in which approximate nearest neighbor answers force recovery of $n$ independent bits.

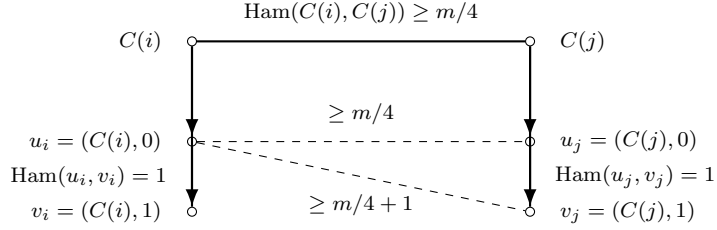
\begin{figure}[t]
\centering
\begin{tikzpicture}[
  font=\small,
  >=Latex,
  node distance=7mm and 10mm,
  blk/.style={draw, rounded corners, inner sep=4pt, align=center},
  pt/.style={draw, circle, inner sep=1.2pt},
  lab/.style={font=\scriptsize},
  arrow/.style={->, thick},
  brace/.style={decorate, decoration={brace, amplitude=4pt}},
]

% Two representative codewords
\node[pt, below left=10mm and 22mm] (Ci) {};
\node[lab, left=2mm of Ci] {$C(i)$};

\node[pt, below right=10mm and 22mm ] (Cj) {};
\node[lab, right=2mm of Cj] {$C(j)$};

% Distance annotation between codewords
\draw[thick] (Ci) -- (Cj);
\node[lab, above=1mm of $(Ci)!0.5!(Cj)$] {$\Ham(C(i),C(j))\ge m/4$};

% u_i, v_i under C(i)
\node[pt, below=12mm of Ci] (ui) {};
\node[lab, left=2mm of ui] {$u_i=(C(i),0)$};

\node[pt, below=8mm of ui] (vi) {};
\node[lab, left=2mm of vi] {$v_i=(C(i),1)$};

\draw[arrow] (Ci) -- (ui);
\draw[arrow] (Ci) -- (vi);

% u_j, v_j under C(j)
\node[pt, below=12mm of Cj] (uj) {};
\node[lab, right=2mm of uj] {$u_j=(C(j),0)$};

\node[pt, below=8mm of uj] (vj) {};
\node[lab, right=2mm of vj] {$v_j=(C(j),1)$};

\draw[arrow] (Cj) -- (uj);
\draw[arrow] (Cj) -- (vj);

% Distance within a pair: 1
\draw[thick] (ui) -- (vi);
\node[lab, left=2mm of $(ui)!0.5!(vi)$] {$\Ham(u_i,v_i)=1$};

\draw[thick] (uj) -- (vj);
\node[lab, right=2mm of $(uj)!0.5!(vj)$] {$\Ham(u_j,v_j)=1$};

% Distances across different codewords remain large (up to +1)
\draw[dashed] (ui) -- (uj);
\node[lab, above=1mm of $(ui)!0.5!(uj)$] {$\ge m/4$};

\draw[dashed] (ui) -- (vj);
\node[lab, below=1mm of $(ui)!0.5!(vj)$] {$\ge m/4+1$};

\end{tikzpicture}
\caption{Intuition for the lower-bound construction. Pick $C(1),\dots,C(n)\in\{0,1\}^m$ so that for all $i\neq j$,
$\Ham(C(i),C(j))\ge m/4$. Then lift each $C(i)$ into a tight pair $u_i=(C(i),0)$ and $v_i=(C(i),1)$ in dimension $d=m{+}1$. The last coordinate creates a distance-$1$ toggle that encodes a bit, while the underlying code separation ensures that all points from different indices remain far apart, enabling the forcing lemma used in the reduction to QRACs.}
\end{figure}

We begin with a standard probabilistic construction of a code on $n$ symbols with block length $\Theta(\log n)$ and constant relative distance.

\begin{lemma}
\label{lem:code}
Let $m \ge 64\log_2 n$.
Then there exist codewords $C(1),\dots,C(n)\in \bits^m$ such that for all $i\neq j$,
\[
\Ham(C(i),C(j)) \;\ge\; \frac{m}{4}.
\]
\end{lemma}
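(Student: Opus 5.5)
The plan is a probabilistic construction: choose $C(1),\dots,C(n)$ independently and uniformly at random from $\bits^m$, then show that with positive probability every pair is at distance at least $m/4$. The argument is a Chernoff bound for a single pair followed by a union bound over all pairs. We may assume $n\ge 2$; for $n=1$ there is nothing to prove.

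\textbf{Single pair.} Fix $i\neq j$. The vector $C(i)\oplus C(j)$ is uniform on $\bits^m$, so $\Ham(C(i),C(j))$ is a sum of $m$ independent fair Bernoulli variables with mean $m/2$. Hoeffding's inequality gives
\[
\Pr\bigl[\Ham(C(i),C(j)) < m/4\bigr] \le \exp\bigl(-2(m/4)^2/m\bigr) = e^{-m/8}.
\]

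\textbf{Union bound.} There are $\binom{n}{2} < n^2/2$ pairs, so the probability that some pair is at distance below $m/4$ is less than $\tfrac{n^2}{2}e^{-m/8}$. We need this to be at most $1$. Since $m\ge 64\log_2 n$, we have $m/8 \ge 8\log_2 n = 8\ln n/\ln 2 > 2\ln n$. Hence $e^{-m/8} < n^{-2}$, and the failure probability is below $1/2$. So some choice of codewords satisfies the claim. We do not even need the codewords to be distinct separately, because distance at least $m/4>0$ already forces it.

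No step here is a real obstacle; the only care needed is matching constants between $\log_2$ and $\ln$. The slack in the constant $64$ covers this easily. A Gilbert--Varshamov style greedy argument would give the same bound as an alternative.
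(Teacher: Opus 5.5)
Your proof is correct and follows the same route as the paper: independent uniform codewords, a tail bound on $\mathrm{Bin}(m,1/2)$ for a single pair, and a union bound over fewer than $n^2/2$ pairs. The only difference is the tail bound. You use Hoeffding's inequality and get $e^{-m/8}$, while the paper uses the multiplicative Chernoff bound and gets $e^{-m/16}$; the constant $64$ leaves ample room for either.
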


\begin{proof}
Choose $C(1),\dots,C(n)$ independently and uniformly from $\bits^m$.
Fix $i\neq j$.
Then $\Ham(C(i),C(j))$ has the distribution $\mathrm{Bin}(m,1/2)$, with mean $m/2$.
By a Chernoff bound,
\[
\Pr\big[\Ham(C(i),C(j)) \le m/4\big]
=\Pr\Big[\mathrm{Bin}(m,1/2)\le (1-1/2)\cdot (m/2)\Big]
\le \exp(-m/16).
\]
By a union bound over at most $\binom{n}{2} < n^2/2$ pairs,
\[
\Pr\big[\exists\, i\neq j: \Ham(C(i),C(j)) \le m/4\big]
\le \frac{n^2}{2}\, e^{-m/16}.
\]
If $m\ge 64\log_2 n$, then $n^2 e^{-m/16} \le n^2 e^{-4\ln n} = n^{-2} < 1$, hence the above probability is strictly less than $1$, implying the existence of a choice of codewords with pairwise distance at least $m/4$.
\end{proof}

Fix any approximation factor $c\ge 1$ and success probability $p>1/2$.
Let $m$ be large enough so that $m\ge 64\log_2 n$ and $m/4 \ge c+1$.
Set $d=m+1$.
Let $C(1),\dots,C(n)\in\bits^m$ be codewords guaranteed by \cref{lem:code}.
Define, for each $i\in[n]$,
\[
u_i = C(i)\circ 0 \in \bits^d,
\qquad
v_i = C(i)\circ 1 \in \bits^d,
\]
where $\circ$ denotes concatenation and the last coordinate is the appended bit.

For each bit string $x\in\bits^n$, define a dataset $P_x=\{p_1,\dots,p_n\}\subseteq \bits^d$ by
\[
p_i \;=\;
\begin{cases}
u_i & \text{if } x_i=0,\\
v_i & \text{if } x_i=1.
\end{cases}
\]
Define the query points $q_i=u_i$ for all $i\in[n]$.

\begin{lemma}
\label{lem:forcing}
For every $x\in\bits^n$ and every $i\in[n]$, any correct $c$-ANN answer to query $q_i$ on dataset $P_x$ must be $u_i$ if $x_i=0$ and must be $v_i$ if $x_i=1$.
\end{lemma}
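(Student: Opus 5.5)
The plan is to compute $\dist(q_i,P_x)$ exactly and then show that every other dataset point is more than $c$ times farther away, splitting into the cases $x_i=0$ and $x_i=1$. The one geometric fact needed is a lower bound on $\Ham(q_i,p_j)$ for $j\neq i$. Every point of $\bits^d$ in the construction has the form $C(k)\circ b$ with $b\in\bits$, and the Hamming distance on $\bits^d$ splits as the distance on the first $m$ coordinates plus the distance on the last one. Therefore
\[
\Ham(q_i,p_j)\;\ge\;\Ham(C(i),C(j))\;\ge\;\frac{m}{4}\;\ge\;c+1
\qquad\text{for all } j\neq i,
\]
using \cref{lem:code} and the parameter choice $m/4\ge c+1$. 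Since $C(i)\neq C(j)$, the points $p_1,\dots,p_n$ are also distinct, so returning a point is the same as returning an index.

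In the case $x_i=0$, we have $p_i=u_i=q_i$, so $\dist(q_i,P_x)=0$. The $c$-ANN condition then requires a returned point $p$ with $\Ham(q_i,p)\le c\cdot 0=0$. Every $p_j$ with $j\neq i$ has positive distance by the display, so the only admissible answer is $p_i=u_i$.

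In the case $x_i=1$, we have $p_i=v_i$, and $\Ham(q_i,v_i)=\Ham(C(i)\circ 0,\,C(i)\circ 1)=1$. Combined with the display, this gives $\dist(q_i,P_x)=1$. An admissible answer $p$ must satisfy $\Ham(q_i,p)\le c$. Every $p_j$ with $j\neq i$ has $\Ham(q_i,p_j)\ge c+1>c$, so the only admissible answer is $v_i$.

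No step here is a real obstacle. The only points needing care are that the lower bound on $\Ham(q_i,p_j)$ must hold regardless of the appended bits $0$ or $1$, which the coordinate splitting handles, and that the slack $m/4\ge c+1$ gives a strict inequality against $c\cdot 1$ in the case $x_i=1$.
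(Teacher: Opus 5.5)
Your proof is correct and follows the paper's argument: compute $\dist(q_i,P_x)$ in each case, then use the code distance $m/4\ge c+1$ to rule out every $p_j$ with $j\neq i$. The only cosmetic difference is that you merge the paper's two subcases $x_j=0$ and $x_j=1$ into the single bound $\Ham(q_i,p_j)\ge\Ham(C(i),C(j))$, and you explicitly use that bound for uniqueness in the case $x_i=0$, where the paper leaves it implicit.
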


\begin{proof}
If $x_i=0$ then $u_i\in P_x$ and $\Ham(q_i,u_i)=\Ham(u_i,u_i)=0$, hence $\dist(q_i,P_x)=0$.
The $c$-ANN condition requires output distance at most $c\cdot 0=0$, so the output must be a dataset point at Hamming distance $0$ from $q_i$, which is uniquely $u_i$.

If $x_i=1$ then $v_i\in P_x$ and $q_i=u_i$ differs from $v_i$ only in the last coordinate, so $\Ham(q_i,v_i)=1$ and therefore $\dist(q_i,P_x)\le 1$.
We now show that every other dataset point is far from $q_i$.
Fix $j\neq i$.
If $x_j=0$, then $p_j=u_j=C(j)\circ 0$ and
\[
\Ham(q_i,p_j)=\Ham(C(i)\circ 0,\, C(j)\circ 0)=\Ham(C(i),C(j))\ge m/4 \ge c+1.
\]
If $x_j=1$, then $p_j=v_j=C(j)\circ 1$ and
\[
\Ham(q_i,p_j)=\Ham(C(i)\circ 0,\, C(j)\circ 1)=\Ham(C(i),C(j))+1 \ge m/4+1 \ge c+2.
\]
Thus every $p_j$ with $j\neq i$ is at distance at least $c+1$ from $q_i$, while $v_i$ is at distance $1$.
Since $\dist(q_i,P_x)\le 1$, any $c$-approximate neighbor must be within distance $c\cdot 1=c$ of $q_i$, and therefore cannot be any $p_j$ for $j\neq i$.
The unique valid output is $v_i$.
\end{proof}

\begin{theorem}
\label{thm:main}
Fix any $c\ge 1$ and any success probability $p>1/2$.
For the dataset family $\{P_x\}_{x\in\bits^n}\subseteq \bits^d$ with $d=\Theta(\log n)$ constructed above and the queries $\{q_i\}_{i=1}^n$, any quantum sketch that encodes $P_x$ into an $m$-qubit state $\rho_{P_x}$ and answers $c$-ANN queries $q_i$ with success probability at least $p$ for every $x$ and $i$ must satisfy
\[
m \;\ge\; (1-h(p))\,n \;=\; \Omega(n).
\]
In particular, $m=O(\log n)$ qubits is impossible.
\end{theorem}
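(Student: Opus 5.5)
The plan is to turn any such sketch directly into a QRAC and then apply \cref{thm:nayak}. Given a quantum sketch with encoder $P\mapsto\rho_P$ on $m$ qubits, I define the encoding $x\mapsto\rho_x:=\rho_{P_x}$ for $x\in\bits^n$. The datasets $P_x$ all have exactly $n$ points, $p_i\in\{u_i,v_i\}$, and they are indexed consistently by $i$. So the encoder is well defined on the whole family. One bookkeeping point needs care: the letter $m$ is used both for the codeword length in \cref{lem:code} and for the number of qubits in the theorem. In the write-up I would rename the code length (say to $\ell$, so $d=\ell+1=\Theta(\log n)$) to keep the two apart.

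Next, for each index $i$, I specify the decoding measurement for bit $x_i$. Run the sketch's decoder on query $q_i=u_i$ applied to $\rho_x$. This is a measurement that depends only on $i$, not on $x$; any internal randomness and ancillas are absorbed into a single POVM. The decoder returns an index $k\in[n]$, which names the dataset point $p_k$. The natural readout uses the returned point itself: if it is $u_i$, output $0$; if it is $v_i$, output $1$; otherwise output an arbitrary bit. The subtlety is that the decoder outputs an index rather than a point, so it may seem to learn nothing about the last coordinate. I would resolve this by reading the model as returning the point $p_k$, i.e., the pair (index, point) as the meaning of ``returned dataset point''. Under that reading the readout is the last coordinate of the returned point, restricted to the case $k=i$. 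I expect this to be the main obstacle. If the output really is a bare index, \cref{lem:forcing} only forces $k=i$ in both cases, and the bit is not revealed. In that case I would either make the reading above explicit, or modify the query slightly. One option is to use two query families, $u_i$ and $v_i$. Another is to change the instance so that the choice of bit changes \emph{which index} is nearest, for example by putting both candidate points in different index slots.

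The correctness step then follows from \cref{lem:forcing}. On dataset $P_x$, the unique valid $c$-ANN answer to $q_i$ is $u_i$ when $x_i=0$ and $v_i$ when $x_i=1$. Whenever the sketch succeeds, which happens with probability at least $p$, the readout therefore equals $x_i$. This holds for every $x$ and $i$, so $\{\rho_x\}$ is an $(n,m,p)$-QRAC.

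Finally, since $p>1/2$, \cref{thm:nayak} gives $m\ge(1-h(p))n$. Here $1-h(p)>0$ is a constant, so $m=\Omega(n)$, which rules out $m=O(\log n)$. The fresh-copy model causes no trouble, because a QRAC only ever needs one copy per index.
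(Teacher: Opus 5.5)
Your proposal is correct and follows the paper's own reduction ($\rho_x:=\rho_{P_x}$, Lemma~\ref{lem:forcing}, then Theorem~\ref{thm:nayak}). The index-versus-point issue you flag is genuine: the paper's proof simply asserts that ``the returned index determines $x_i$,'' but with a bare-index output the unique valid answer to $q_i$ is always index $i$, so a zero-qubit decoder that outputs $i$ would meet the hypotheses and the statement would fail. The point-output reading you adopt is therefore exactly what the argument needs. Of your fallbacks, adding the queries $v_i$ does not help, because every valid answer is still index $i$; re-indexing so that $x_i$ decides which slot holds the near point does work, at the cost of changing the instance family.
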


\begin{proof}
Assume a sketch exists.
Define an encoding of $x\in\bits^n$ by $\rho_x := \rho_{P_x}$.
Given an index $i\in[n]$, run the ANN query procedure on input $q_i$ using a fresh copy of $\rho_x$.
By \cref{lem:forcing}, any correct answer must be $u_i$ if $x_i=0$ and $v_i$ if $x_i=1$, hence the returned index determines $x_i$.
By the assumed success probability, this recovers $x_i$ with probability at least $p$ for every $x$ and $i$.
Therefore $\{\rho_x\}$ is an $(n,m,p)$-QRAC.
By Nayak's lower bound (\cref{thm:nayak}), $m\ge (1-h(p))n$.
\end{proof}

\begin{remark}
The obstruction is not that the dimension is too small: the construction already lives in dimension $d=\Theta(\log n)$, the same scale suggested by JL-type reductions.
What matters is that the admissible dataset family contains $2^n$ instances whose query-answer behavior separates all $n$ bits.
Geometric compression of coordinates does not prevent the query mechanism from being forced to reveal independent dataset information.
\end{remark}

A hashing-based ANN scheme can be viewed abstractly as preprocessing the dataset into a short description (hash tables), and at query time probing a few buckets and returning an element.
In our sketch model, the entire preprocessing output is absorbed into $\rho_P$, and the query procedure is an arbitrary measurement that may implement any internal hashing, bucket selection, and postprocessing.
Thus \cref{thm:main} rules out \emph{all} schemes---classical or quantum---that attempt to compress an arbitrary $n$-point dataset into $O(\log n)$ qubits while preserving worst-case ANN power.

\section{Capacity viewpoint}

The explicit family $\{P_x\}$ is a concrete vehicle for a more general lesson: the relevant parameter is the combinatorial richness of the query-answer behavior induced by the allowed dataset family.

Consider a decision version of near-neighbor search.
Fix a threshold $r>0$ and define
\[
\mathrm{Near}_r(P,q) \;:=\; \mathbf{1}\big[\exists p\in P:\dist(p,q)\le r\big].
\]
Each dataset $P$ induces a Boolean function $f_P(\cdot)=\mathrm{Near}_r(P,\cdot)$ on a query domain $\cQ$.
Let $\cF=\{f_P: P\in\cP\}$ be the induced function class from an allowed dataset family $\cP$.

\begin{proposition}
\label{prop:vc}
If $\cF$ shatters a set of $t$ queries (i.e., $\mathrm{VCdim}(\cF)\ge t$), then any $m$-qubit quantum sketch that answers $\mathrm{Near}_r(P,q)$ correctly on those $t$ queries with success probability at least $p>1/2$ must satisfy
\[
m \;\ge\; (1-h(p))\,t.
\]
\end{proposition}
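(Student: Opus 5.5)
The plan is to mirror the proof of \cref{thm:main}: a sketch for $\mathrm{Near}_r$ on a shattered set yields a QRAC, and \cref{thm:nayak} then gives the bound.

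Let $S=\{s_1,\dots,s_t\}\subseteq\cQ$ be a set of $t$ queries shattered by $\cF$. By the definition of shattering, for every $y\in\bits^t$ there is a dataset $P_y\in\cP$ with $f_{P_y}(s_i)=y_i$ for all $i\in[t]$. If several datasets realize the same pattern $y$, we fix one of them arbitrarily. This selection $y\mapsto P_y$ replaces the explicit family $\{P_x\}$ used earlier.

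The encoding is $\rho_y:=\rho_{P_y}$, an $m$-qubit state. To recover the bit $y_i$, the decoder runs the sketch's query measurement for $s_i$ on a fresh copy of $\rho_y$. By assumption, this measurement outputs $\mathrm{Near}_r(P_y,s_i)=f_{P_y}(s_i)=y_i$ with probability at least $p$. The decision setting plays the role of \cref{lem:forcing}: the correct output is the bit itself, so no intermediate argument is needed to extract it from the answer. The output here is a Boolean answer rather than an index in $[n]$; the sketch definition extends to it directly, or the bit can be read off from the output.

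Since this holds for every $y\in\bits^t$ and every $i\in[t]$, the family $\{\rho_y\}$ is a $(t,m,p)$-QRAC. Applying \cref{thm:nayak} gives $m\ge(1-h(p))\,t$.

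The only point that needs care is the uniformity requirement. The QRAC definition asks for success probability at least $p$ for every $x$ and every index, so the sketch's guarantee must hold for every $P\in\cP$, or at least for every chosen $P_y$, and for every query in $S$. The hypothesis "correctly on those $t$ queries with probability at least $p$" should be read in this worst-case sense. Under that reading the argument is complete; an averaged guarantee would only give an average-case QRAC and would need a different bound.
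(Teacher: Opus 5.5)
Your proposal is correct and matches the paper's proof: you choose a dataset $P_x$ realizing each label pattern on the shattered queries, set $\rho_x:=\rho_{P_x}$, and decode $x_i$ by running the decision measurement for $q_i$ on a fresh copy, which gives a $(t,m,p)$-QRAC, and then you apply Nayak's bound. Your remark that the success guarantee must be read in the worst-case sense (every dataset and every query) is the same reading the paper's proof uses implicitly.
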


\begin{proof}
Shattering means there exist queries $q_1,\dots,q_t$ such that for every label string $x\in\bits^t$ there exists a dataset $P_x\in \cP$ with $f_{P_x}(q_i)=x_i$ for all $i$.
Given a sketch for the decision problem, define $\rho_x:=\rho_{P_x}$.
To decode bit $x_i$, run the decision procedure on $q_i$ using a fresh copy of $\rho_x$ and output the resulting bit.
By correctness, this succeeds with probability at least $p$, hence $\{\rho_x\}$ forms a $(t,m,p)$-QRAC.
Apply \cref{thm:nayak}.
\end{proof}

\begin{remark}
Nearest neighbor search outputs an index (a multi-class label), not a bit.
In learning theory, the Natarajan dimension generalizes VC dimension to multi-class hypothesis classes \cite{Nat89}.
One can either reduce ANN to a decision problem as above or work directly with multi-class shattering.
Either route leads to the same message: large answer-function capacity forces large memory, even when the memory is quantum.
\end{remark}

\section{What quantum improvements remain}

The lower bound concerns compressing the dataset into a single short quantum state.
It does not rule out quantum advantages when the dataset is stored in a more conventional memory model and quantum computation accelerates the \emph{query phase}.
A common pattern in hashing-based ANN is that hashing produces a candidate set of indices, and one then checks candidates by evaluating distances or a predicate.

We formalize this as a candidate-scanning oracle model.

\begin{definition}
\label{def:candidate}
Fix a query $q$ and a candidate set $S(q)\subseteq [n]$ with $|S(q)|=M$.
Assume coherent access in the following sense.
There is a procedure that prepares the uniform superposition
\[
\frac{1}{\sqrt{M}}\sum_{j\in S(q)} |j\rangle,
\]
and there is a phase oracle $O$ that, given $|j\rangle$, flips the phase iff $j\in S(q)$ satisfies a desired predicate (e.g.\ $\dist(q,p_j)\le r$).
Let $t$ be the number of satisfying indices in $S(q)$.
\end{definition}

\begin{theorem}[\cite{Gro96}]
\label{thm:grover}
In the model of \cref{def:candidate}, if $t\ge 1$ then there is a quantum algorithm that finds a satisfying index with constant probability using $O(\sqrt{M/t})$ oracle calls.
\end{theorem}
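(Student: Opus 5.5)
The plan is to prove \cref{thm:grover} by running Grover's amplitude amplification on the uniform superposition over $S(q)$. Let $A$ be the given preparation unitary, so that $A|0\rangle = |\psi\rangle := \frac{1}{\sqrt{M}}\sum_{j\in S(q)}|j\rangle$. Split $|\psi\rangle$ into its good and bad components:
\[
|\psi\rangle=\sin\theta\,|G\rangle+\cos\theta\,|B\rangle,
\qquad
\sin^2\theta = t/M .
\]
Here $|G\rangle$ is the normalized uniform superposition over the $t$ satisfying indices, and $|B\rangle$ is the normalized uniform superposition over the remaining $M-t$ indices.

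The Grover iterate is $Q = -A R_0 A^{\dagger} O$, where $R_0 = I - 2|0\rangle\langle 0|$ is the reflection about $|0\rangle$ and $O$ is the phase oracle. Equivalently, $Q$ is the product of the reflection about $|B\rangle$ (this is what $O$ does inside the span) and the reflection about $|\psi\rangle$. I will check that the two-dimensional subspace $\mathrm{span}\{|G\rangle,|B\rangle\}$ is invariant under $Q$. On that subspace, the product of the two reflections is a rotation by angle $2\theta$. It follows that
\[
Q^k|\psi\rangle=\sin((2k+1)\theta)|G\rangle+\cos((2k+1)\theta)|B\rangle .
\]

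Next comes the choice of iteration count. Take $k=\lfloor \pi/(4\theta)\rfloor$. Then $(2k+1)\theta$ lies within $\theta$ of $\pi/2$. When $\theta\le\pi/6$ (that is, $t\le M/4$), this keeps the success probability $\sin^2((2k+1)\theta)$ at least $\cos^2\theta \ge 3/4$, a constant. The iterations use $k = O(1/\theta) = O(\sqrt{M/t})$ oracle calls, since $\theta\ge\sin\theta=\sqrt{t/M}$. After them, measure and verify the outcome with one more oracle call. If $t > M/4$, it suffices to sample directly from $|\psi\rangle$, which succeeds with probability at least $1/4$ using $O(1)$ calls.

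The main obstacle is that $t$ is not known in advance, while the iteration count above depends on it. My first approach is the Boyer--Brassard--H{\o}yer--Tapp exponential schedule:
\begin{itemize}
\item keep a scale $\lambda$, initially $1$, and pick $k$ uniformly from $\{0,\dots,\lceil\lambda\rceil-1\}$;
\item run $Q^k$, measure, and check the outcome with the oracle;
\item on failure, multiply $\lambda$ by $6/5$, capped at $\sqrt{M}$.
\end{itemize}
The key averaging lemma is that once $\lambda\gtrsim 1/\sin 2\theta$, a uniformly random $k<\lambda$ gives average success probability $\tfrac12-\tfrac{\sin(4\lambda\theta)}{4\lambda\sin 2\theta}\ge 1/4$. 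Summing the geometric series of costs then gives $O(\sqrt{M/t})$ expected calls. Markov's inequality converts this into constant success probability within a constant multiple of that budget. If the statement is read as allowing $t$ to be known, this step can be skipped entirely.
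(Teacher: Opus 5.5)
Your proposal is correct and follows the same argument as the paper. You write the uniform superposition over $S(q)$ as $\sin\theta\,|G\rangle+\cos\theta\,|B\rangle$ with $\sin^2\theta=t/M$, note that the Grover iterate rotates by $2\theta$ in $\mathrm{span}\{|G\rangle,|B\rangle\}$, and stop after $k=\Theta(1/\theta)=O(\sqrt{M/t})$ iterations. Your explicit choice $k=\lfloor\pi/(4\theta)\rfloor$ and your separate treatment of $t>M/4$ fill in details the paper leaves implicit; the case split is genuinely needed, since $t=3M/4$ with $k=1$ gives success probability $\sin^2\pi=0$. Your BBHT schedule for unknown $t$ likewise goes beyond the paper, which tacitly assumes $t$ is known. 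One caveat: in the BBHT schedule, checking a measured index needs a controlled version of $O$ or classical evaluation of the predicate, because a bare phase oracle applied to a basis state contributes only a global phase.
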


\begin{proof}
Let $|G\rangle$ be the normalized uniform superposition over satisfying indices and $|B\rangle$ be the normalized uniform superposition over non-satisfying indices.
The initial state is
\[
|\psi_0\rangle \;=\; \sqrt{\frac{t}{M}}\,|G\rangle \;+\; \sqrt{1-\frac{t}{M}}\,|B\rangle,
\]
which lies in the two-dimensional span of $\{|G\rangle,|B\rangle\}$.
A Grover iterate consists of a reflection about the marked subspace (implemented by $O$) followed by a reflection about $|\psi_0\rangle$.
This acts as a rotation by angle $2\theta$ in the $\{|G\rangle,|B\rangle\}$ plane, where $\sin^2\theta=t/M$.
After $k$ iterations, the amplitude on $|G\rangle$ becomes $\sin((2k+1)\theta)$, which is a constant for $k=\Theta(1/\theta)=\Theta(\sqrt{M/t})$.
Measuring then yields a satisfying index with constant probability.
\end{proof}

\begin{theorem}[Black-box optimality \cite{BBBV97}]
\label{thm:bbbv}
In the candidate-scanning oracle model, any quantum algorithm that finds a satisfying index with constant probability requires $\Omega(\sqrt{M/t})$ oracle calls in the worst case.
In particular, when $t=1$ one needs $\Omega(\sqrt{M})$ calls.
\end{theorem}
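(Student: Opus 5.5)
We would prove \cref{thm:bbbv} by the BBBV hybrid argument, carried out inside the candidate-scanning model. An algorithm has access only to the preparation of $\frac{1}{\sqrt M}\sum_{j\in S(q)}|j\rangle$ and to the phase oracle $O$. The preparation does not depend on which indices are marked, so we may absorb it into the oracle-independent unitaries. What remains is a standard query algorithm $U_T O U_{T-1}\cdots U_1 O U_0|0\rangle$ against a phase oracle on $M$ items. It therefore suffices to show that search among $M$ items with $t$ marked ones needs $\Omega(\sqrt{M/t})$ queries. Any instance of this unstructured search arises as a candidate-scanning instance: take $S(q)$ of size $M$ and choose the dataset points so that exactly the desired indices satisfy the predicate.

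First we handle $t=1$. Let $|\phi_k\rangle$ be the state after $k$ queries when the oracle is the identity, i.e.\ no marked item, and let $|\phi_k^{y}\rangle$ be the state when only $y$ is marked. Define the query magnitude $w_y(k)=\|\Pi_y|\phi_k\rangle\|^2$, where $\Pi_y$ projects the query register onto $|y\rangle$. Replacing the $k$-th query with the true oracle $O_y$ changes the state by at most $2\sqrt{w_y(k)}$. A hybrid telescoping then gives $\||\phi_T^y\rangle-|\phi_T\rangle\|\le 2\sum_{k<T}\sqrt{w_y(k)}$. Since $\sum_y w_y(k)\le 1$, Cauchy--Schwarz yields $\sum_y\||\phi_T^y\rangle-|\phi_T\rangle\|^2\le 4T^2$.

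Next we turn success into distance. If the algorithm outputs $y$ with probability at least a constant $\epsilon$ on oracle $O_y$ for every $y$, then $|\phi_T\rangle$, which does not depend on $y$, can put weight at least $\epsilon/2$ on output $y$ for only $O(1/\epsilon)$ values of $y$. For every other $y$, the states $|\phi_T^y\rangle$ and $|\phi_T\rangle$ must be $\Omega(1)$ apart in norm. Summing over these $y$ gives $\Omega(M)\le 4T^2$, hence $T=\Omega(\sqrt M)$.

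For general $t$ we reduce to the case $t=1$. Partition the $M$ candidates into $t$ blocks of size $\lfloor M/t\rfloor$. Hide a single marked element in one designated block, and in every other block mark one element whose position is fixed and known to the algorithm, so there are exactly $t$ marked items. An algorithm that finds some marked index with constant probability must find the hidden one with probability bounded below once the fixed ones are discounted. The cleanest way to arrange this is to have the reduction's oracle simulate the $t-1$ known marks for free. This yields an $M/t$-item single-target search and the bound $\Omega(\sqrt{M/t})$.

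The main obstacle is this last reduction. If marks are placed in known positions, the algorithm can simply output one of them, so the reduction fails as stated. The fix is to randomize: place one hidden mark uniformly in each of the $t$ blocks, independently. Any output is marked only if it hits the hidden element of its block. Then apply the hybrid bound per block, or equivalently the averaged version of the query-magnitude argument over random $t$-sets, which gives $\sum_y w_y\le 1$ spread over $M$ positions with $t$ changed per instance. Running the averaged argument carefully gives $T^2 \cdot t/M=\Omega(1)$.
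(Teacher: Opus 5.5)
Your $t=1$ argument is the same BBBV hybrid/query-magnitude argument the paper sketches, and it is correct. It compares with the unmarked oracle, gets $\sum_y\||\phi_T^y\rangle-|\phi_T\rangle\|^2\le 4T^2$, and then observes that $|\phi_T\rangle$ can put weight $\ge\epsilon/2$ on only $O(1/\epsilon)$ outputs.

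For general $t$ you take a different route. The paper appeals to a black-box reduction from unique search on $M/t$ items. This is cleanly realized by indexing the candidates as $[M/t]\times[t]$ and marking $(j,\ell)$ iff $j=s$. One query to $O_s$ simulates one query to the $M$-item oracle, exactly $t$ items are marked, and any marked output reveals $s$. So the $t=1$ bound transfers with no loss.

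The obstacle you hit (the algorithm just outputs a known mark) comes from partitioning the other way, into $t$ blocks of size $M/t$. With $M/t$ groups of size $t$ marked all together, no randomization is needed. Your middle paragraph is false as written and can be dropped.

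Your randomized alternative (one uniform mark per block) is also valid. It gives slightly more, namely hardness on average over a random marked set. But it cannot cite the $t=1$ case: the success-to-distance step must be redone, and that is exactly the step you left as ``carefully''. Concretely:
\begin{itemize}
\item $\||\phi_T^Y\rangle-|\phi_T\rangle\|\le 2\sum_{k<T}\|\Pi_Y|\phi_k\rangle\|$, and each position lies in $Y$ with probability $t/M$. Hence $\mathrm{E}_Y\||\phi_T^Y\rangle-|\phi_T\rangle\|^2\le 4T^2t/M$.
\item The output of $|\phi_T\rangle$ lands in $Y$ with expected probability $t/M$. By Markov (when $t/M\le\epsilon/4$), at least half of the sets $Y$ have final states at distance $\ge\epsilon/2$.
\item Combining the two gives $\epsilon^2/8\le 4T^2t/M$.
\end{itemize}
This computation must be done jointly over all blocks. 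Applying the $t=1$ theorem as a black box to a single block only gives success probability about $\epsilon/t$ there, and loses at least a factor $\sqrt t$.
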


\begin{proof}[Proof sketch]
When $t=1$, the oracle restricted to the candidate register is precisely the unstructured search oracle with a unique marked item.
The BBBV hybrid argument bounds how much $Q$ oracle calls can separate the final states corresponding to different marked positions.
Concretely, averaging over the marked index $s\in[M]$, the expected squared distance between the final state under oracle $O_s$ and the final state under the all-zero oracle is $O(Q^2/M)$.
If $Q=o(\sqrt{M})$, these states remain close for most $s$, hence no measurement can identify $s$ with constant success probability.
The general $t$ case follows by standard reductions (e.g.\ by embedding a size-$M/t$ unique-search instance into the promise of $t$ marked items).
\end{proof}

\subsection{Relation to oracle-model quantum nearest neighbor algorithms}

Several quantum nearest neighbor proposals operate in oracle/QRAM-like models where training vectors are accessed coherently and distances are estimated in superposition.
For instance, Wiebe, Kapoor, and Svore \cite{WKS14} develop quantum algorithms for nearest-neighbor learning that combine coherent distance estimation with quantum minimum finding and achieve query-complexity scalings consistent with $\tilde O(\sqrt{M})$-type improvements in regimes where the essential difficulty reduces to searching among candidates.
This aligns with \cref{thm:grover,thm:bbbv}.
Our memory lower bound is compatible with such results because it targets a different goal: compressing \emph{arbitrary} datasets into a single $O(\log n)$-qubit state while preserving worst-case query power.

\section{Discussion}

The obstruction proved here is conceptually simple.
Worst-case ANN can be forced to reveal $n$ essentially independent bits of dataset information through $n$ carefully chosen queries, and no $m=o(n)$ qubit state can support decoding of all those bits with constant bias.
The fact that the hard family lives already in dimension $d=\Theta(\log n)$ underscores that JL-type reductions do not address the bottleneck; the bottleneck is not coordinate representation size but recoverable classical information.

At the same time, the lower bound is not a ``no quantum advantage'' statement.
It suggests that the most robust quantum gains in ANN-like pipelines occur when the dataset is stored in a standard memory model but the query algorithm obtains coherent access to candidate sets and predicates.
In that setting, quadratic speedups in candidate scanning are both achievable and essentially optimal absent additional structure.

A natural direction is to identify restricted dataset families that avoid the random access code obstruction.
The capacity viewpoint (\cref{prop:vc}) suggests one route, if the induced answer functions have small combinatorial dimension on the query domain of interest, then information-theoretic compression may be possible.
The harder question is which restrictions remain meaningful for practical similarity search and which admit algorithmic sketches that are simultaneously small and efficiently decodable.

Another direction is to move beyond black-box candidate checking.
If bucket contents have additional structure---for example, strong promise gaps, special algebraic structure, or geometric regularity that can be exploited coherently---then the right quantum primitives need not be plain Grover search, and stronger-than-quadratic speedups may become possible for carefully promised instances.

% \section{Acknowledgments}
% The author declares that there are no competing interests, financial or otherwise, that could have influenced this work. This research did not receive any specific grant from funding agencies in the public, commercial, or not-for-profit sectors.
% \subsection{Declaration of generative AI and AI-assisted technologies in the manuscript preparation process}
% During the preparation of this work, the author used ChatGPT, a generative AI language model, in order to assist with drafting portions of the text and checking for editorial and linguistic errors. After using this tool, the author reviewed and edited the content as needed and takes full responsibility for the content of the published article.

\bibliographystyle{alpha}
\bibliography{references}

\appendix

\section{Proof of Nayak's QRAC lower bound}
\label{sec:proof-nayak}

We provide a self-contained proof of \cref{thm:nayak} following a standard information-theoretic route \cite{Nay99,NC00}.

Let $X=(X_1,\dots,X_n)$ be uniformly random in $\bits^n$ and consider the classical--quantum state
\[
\omega_{XB} \;=\; 2^{-n}\sum_{x\in\bits^n} |x\rangle\!\langle x| \otimes \rho_x,
\]
where $B$ denotes the $m$-qubit system.
Let $\rho := 2^{-n}\sum_x \rho_x$ be the average state on $B$.

We first upper bound the mutual information $I(X:B)_\omega$.
Since $X$ is classical, one has
\[
I(X:B)_\omega \;=\; S(\rho) - 2^{-n}\sum_{x} S(\rho_x),
\]
where $S(\cdot)$ denotes von Neumann entropy.
In particular $I(X:B)_\omega \le S(\rho)$.
Because $\rho$ acts on an $m$-qubit system, its entropy is at most $m$ bits, i.e.\ $S(\rho)\le m$.
Thus
\begin{equation}
\label{eq:holevo}
I(X:B)_\omega \;\le\; m.
\end{equation}

Next we lower bound $I(X:B)_\omega$ using the QRAC decoding guarantee.
By the chain rule for mutual information,
\[
I(X:B)_\omega \;=\; \sum_{i=1}^n I(X_i : B \mid X_{<i})_\omega,
\]
where $X_{<i}=(X_1,\dots,X_{i-1})$.

Fix $i$ and condition on any value $a\in\bits^{i-1}$ of $X_{<i}$.
Because the QRAC decoding for index $i$ succeeds with probability at least $p$ for every input string $x$, it in particular succeeds with probability at least $p$ under the conditional distribution $X\mid (X_{<i}=a)$.
Let $M_i$ be the measurement used to decode $X_i$.
Apply $M_i$ to the quantum system $B$ and let $G_i$ be the resulting classical guess bit.
Then, conditioned on $X_{<i}=a$, we have
\[
\Pr[G_i = X_i \mid X_{<i}=a] \;\ge\; p.
\]
Since $X_i$ is uniform in $\bits$ even after conditioning on $X_{<i}=a$, the pair $(X_i,G_i)$ is a binary symmetric channel with error probability at most $1-p$, hence
\[
H(X_i \mid G_i, X_{<i}=a)\;\le\; h(p).
\]
Therefore
\[
I(X_i : G_i \mid X_{<i}=a)
= H(X_i\mid X_{<i}=a) - H(X_i\mid G_i, X_{<i}=a)
\ge 1 - h(p),
\]
because $H(X_i\mid X_{<i}=a)=1$.
Finally, since $G_i$ is obtained from $B$ by a measurement, the data-processing inequality implies
\[
I(X_i : B \mid X_{<i}=a) \;\ge\; I(X_i : G_i \mid X_{<i}=a) \;\ge\; 1-h(p).
\]
Averaging over $a$ yields $I(X_i:B\mid X_{<i})_\omega \ge 1-h(p)$, and summing over $i$ gives
\begin{equation}
\label{eq:lower}
I(X:B)_\omega \;\ge\; n(1-h(p)).
\end{equation}

Combining \eqref{eq:holevo} and \eqref{eq:lower} yields $m \ge n(1-h(p))$, which is \cref{thm:nayak}.
\qed
\end{document}